\newtheorem{theorem}{Theorem}
\newtheorem{definition}{Definition}
\newtheorem{lemma}[definition]{Lemma}
\newtheorem{remark}[definition]{Remark}
\def\@email#1#2{%
 \endgroup
 \patchcmd{\titleblock@produce}
 {\frontmatter@RRAPformat}
 {\frontmatter@RRAPformat{\produce@RRAP{*#1\href{mailto:#2}{#2}}}\frontmatter@RRAPformat}
 {}{}
}%
\begin{document}


\title{From Kraus Operators to the Stinespring Form of Quantum Maps: An Alternative Construction for Infinite Dimensions}
\author{Frederik vom Ende}
 \email{frederik.vom.ende@fu-berlin.de}
\affiliation{ 
Dahlem Center for Complex Quantum
Systems, Freie Universität Berlin, Arnimallee 14, 14195 Berlin, Germany}

\date{\today}

\begin{abstract}
We present an alternative (constructive) proof of the statement that for every 
completely positive, trace-preserving map $\Phi$ there exists an auxiliary Hilbert 
space $\mathcal K$ in a pure state $|\psi\rangle\langle\psi|$ as well as a unitary 
operator $U$ on system plus environment such that
$\Phi$ equals $\operatorname{tr}_{\mathcal K}(U((\cdot)\otimes|\psi\rangle\langle\psi|)U^*)$.
The main tool of our proof is Sz.-Nagy's dilation theorem applied to isometries defined on a subspace.
In our construction, the environment consists of a system of dimension ``Kraus rank of $\Phi$'' together with a qubit, the latter only acting as a catalyst.
In contrast, the original proof of Hellwig \& Kraus given in the 70s yields an auxiliary system of dimension ``Kraus rank plus one''. We conclude by providing an example which illustrates how the constructions differ from each other.
\end{abstract}

\pacs{
02.30.Tb, 
02.40.Pc, 
03.67.-a 
}

\keywords{quantum channel; unitary dilation; Kraus operators; open quantum system;}

\maketitle
\section{Introduction}

It is well known that every completely positive trace-preserving map
can be represented by first coupling to an auxiliary system, followed by applying a global unitary operation, and finally discarding the auxiliary system, cf., e.g., Thm.~6.18 in \cite{Holevo12}.
This representation -- sometimes called Stinespring dilation or Stinespring form of quantum maps -- is a natural approach in the analysis of important properties of completely positive maps.
Due to its direct physical interpretation it can even be applied to certain experimental setups \cite{Braun01,Haake10}.

The Stinespring form is crucial to many areas of quantum physics. To name just a few: in quantum thermodynamics this is how the allowed operations in the corresponding resource theory approach
are defined \cite{Lostaglio19,vomEnde22thermal},
in quantum control it is used to emulate arbitrary 
quantum maps via restricting Markovian evolutions of larger systems \cite{PRL_decoh2,JPB_decoh},
and in continuous-variable quantum information theory \cite{Weedbrook12} 
the channels which have a Stinespring form (where the unitary on the full system is Gaussian)
have been shown to approximately coincide with the
set of linear bosonic channels \cite{Lami18}.
Indeed, the Stinespring form of bosonic Gaussian channels has been studied a fair bit in the past \cite{Holevo07,Caruso08,Caruso11}
as the main information of these infinite-dimensional systems
boils down to 
the underlying classical phase space (i.e.~something finite-dimensional).
This approach has found use in studying, e.g, the Holevo quantity \cite{Shirokov20} or
the black hole quantum information loss problem \cite{Bradler15,Kretschmann08}.

Motivated by the widespread importance of this representation we will take a step back and have a look at how one proves its existence in the first place.
Given a linear map $\Phi$ on $\mathbb C^{n\times n}$, $n\in\mathbb N$, if
\begin{equation}\label{eq:intro_0}
\Phi\equiv\operatorname{tr}_{\mathcal K}\big(U((\cdot)\otimes\omega)U^*\big)
\end{equation}
for some complex Hilbert space $\mathcal K$, some state $\omega$ on $\mathcal K$, and some unitary $U$ on $\mathbb C^n\otimes\mathcal K$,
then the r.h.s.~of \eqref{eq:intro_0} is called \textit{Stinespring form} of $\Phi$.
It is an established fact that a linear map has a Stinespring form if and only if it is completely positive
and trace-preserving, and this characterization even extends to infinite dimensions \cite{HK70,Kraus71}.
To illustrate the common construction -- first in the simpler case of finite dimensions --
one starts with a set of Kraus operators $\{K_i\}_{i=1}^{\ell}$ of $\Phi$ and collects them ``in the first column'' of a larger matrix $U_0$.
Then one fills up the rest of $U_0$ such that it becomes unitary:
because the Kraus operators satisfy $\sum_{i=1}^\ell K_i^*K_i=\mathbbm 1$ the ``columns'' of $U_0$ form an orthonormal system which can then be extended to an orthonormal basis.
This results in a unitary $U$ with the desired property,
Thm.~6.18 \& Eq.~(6.23) in \cite{Holevo12}, cf.~also Appendix~A in \cite{vE22_Stinespring}.
Note that one can also obtain Eq.~\eqref{eq:intro_0} directly from Stinespring's
original dilation theorem for $C^*$-algebras \cite{Stinespring55} (as done in, e.g., Thm.~4.18 ff. in \cite{Heinosaari12}) which, however, is less explicit in the sense that it translates an abstract object (the isometry $V$) into another one (the unitary $U$).
Hence our aim is to start from the Kraus operators as they are more relevant from
an application point-of-view\footnote{
However, be aware that the existence of Kraus operators---in particular in infinite dimensions---is usually proven via Stinespring's dilation theorem for $C^*$-algebras, cf.~Ch.~9, Thm.~2.3 in \cite{Davies76}.
}.

When trying to generalize the idea stated previously to infinite dimensions one immediately runs into two problems:
\begin{itemize}
\item For general spaces, not every isometry can be extended to a unitary. The prime example is the left-shift on $\ell^2(\mathbb N)$ truncated to a subspace:
\begin{align*}
\sigma_{l,0}:\{x\in\ell^2(\mathbb N)\,:\,x_1=0\}&\to\ell^2(\mathbb N)\\
(0,x_1,x_2,x_3,\ldots)&\mapsto(x_1,x_2,x_3,\ldots)
\end{align*}
This is an isometry which cannot be extended to a unitary (or even to an isometry) on $\ell^2(\mathbb N)$;
the reason for this is that $\sigma_{l,0}$ it is already surjective so there is ``no room left''
in the co-domain $\ell^2(\mathbb N)$.
Of course, this is a purely infinite-dimensional effect because in finite 
dimensions every isometric endomorphism is automatically unitary by the rank-nullity theorem.
For more on the extension of isometric operators defined only on a subspace, as well as the connection to extending symmetric operators refer to p.~387~ff.~in \cite{Ludwig83}.
\item Even when assuming that domain and range of the isometry $U_0$ have the same co-dimension,
if this number is infinite, then the above strategy of generating an orthonormal basis relies on the axiom of choice.
This feels counter-intuitive because something as fundamental as describing physical processes via environment interactions should work in any ``reasonable'' mathematical framework.
\end{itemize}
In their original proof of Eq.~\eqref{eq:intro_0} for infinite-dimensional $\mathcal H$ Hellwig \& Kraus 
\cite{HK70} (resp.~\cite{Kraus71,Kraus73,Kraus83})
resolved this problem
by treating $\mathcal H\otimes\mathcal K$ as ``$\operatorname{dim}\mathcal K$-many copies'' of $\mathcal H$ and then simply adding another $\mathcal H$ to it.
As it turns out this is all the extra space one needs in the co-domain to guarantee that an isometry can be extended to something unitary.
For the reader's convenience we summarize their argument in Appendix~\ref{app_A}.

In contrast, in this article we want to stay closer to the proof idea from finite dimensions
(i.e.~defining an isometry on an explicit subspace of $\mathcal H\otimes\mathcal K$  via Kraus operators).
For this we need a result on extensions of isometries on a linear subspace, which
is what the next section will be about.
Then in Section~\ref{sec_main} we present an alternative construction of the Stinespring form of arbitrary quantum maps; in
the spirit of George P{\'o}lya
\textit{``Two proofs are better than one. ``It is safe riding at
two anchors.''\,''} (p.~62 in \cite{Polya04}).
Finally we compare the three methods for constructing a Stinespring form described in this paper by means of a simple example in Section \ref{sec_comparison_example}.

\section{Preliminaries: Extensions of Isometries}\label{sec_prelim}
Sz.-Nagy’s dilation theorem is undoubtedly
among the most famous results in operator theory as a whole,
and in the branch of operator dilations in particular.
It states that for every contraction $T$ on a Hilbert space $\mathcal H$ (i.e.~$\|T\|_{\mathsf{op}}\leq 1$ with $\|\cdot\|_{\mathsf{op}}$ being the usual operator norm) the operator
\begin{equation}\label{eq:sznagy}
\begin{pmatrix}
T&\sqrt{\mathbbm 1-TT^*}\\
\sqrt{\mathbbm1-T^*T}&-T^*
\end{pmatrix},
\end{equation}
on $\mathcal H\times\mathcal H$---where Eq.~\eqref{eq:sznagy} is clearly a dilation of $T$---is unitary (Appendix, Sec.~4 in \cite{SzNagy90}, cf.~also \footnote{
Actually in \cite{SzNagy90} it is proven that $\footnotesize\begin{pmatrix}
T&\sqrt{\mathbbm 1-TT^*}\\
-\sqrt{\mathbbm1-T^*T}&T^*
\end{pmatrix}$ is a unitary dilation of $T$. However, multiplying this from the left with the unitary $\mathbbm1\oplus(-\mathbbm1)$ yields Eq.~\eqref{eq:sznagy}.
}
and Ch.~VI in \cite{Foias90}).

Now assume that we are dealing with an isometry $V_0$ defined on a (closed) subspace $\mathcal M\subseteq\mathcal H$.
One way to extend $V_0$ to an operator on all of $\mathcal H$ is via
$V':=V_0P_{\mathcal M}$ where $P_{\mathcal M}$ be the orthogonal projection\footnote{
Because $\mathcal M$ is closed by assumption, it is a Hilbert space itself (Example~11.3 in \cite{MeiseVogt97en}) which implies that $P_{\mathcal M}$ as well as $V_0^*$ are well-defined (Sec.~2.5 \& Thm.~2.4.2 in \cite{Kadison83}). 
}
onto $\mathcal M$.
Note that $V'$ is a contraction ($\|V'\|_{\mathsf{op}}\leq\|V_0\|_{\mathsf{op}}\|P_{\mathcal M}\|_{\mathsf{op}}=1$)
so we can apply Sz.-Nagy’s dilation theorem to it.
For this is it crucial to observe that $(V')^*$ coincides with the adjoint $V_0^*:\mathcal H\to\mathcal M$ of $V_0$ (i.e.~$\langle V_0x,y\rangle=\langle x,V_0^*y\rangle$ for all $x\in\mathcal M$, $y\in\mathcal H$) because
\begin{align*}
\langle x,(V')^*y\rangle=\langle V'x,y\rangle=\langle V_0(P_{\mathcal M}x),y\rangle&=\langle P_{\mathcal M}x,V_0^*y\rangle\\
&=\langle P_{\mathcal M}x,V_0^*y\rangle+\langle(\mathbbm1- P_{\mathcal M})x,V_0^*y\rangle=\langle x,V_0^*y\rangle
\end{align*}
for all $x,y\in\mathcal H$. In the second-to-last step we used that $V_0^*y\in\mathcal M$ but $(\mathbbm 1-P_{\mathcal M})x\in\mathcal M^\perp$ so the inner product of the two vanishes.
With this as well as the fact that $V_0$ is an isometry (i.e.~$V_0^*V_0=\mathbbm1$) Eq.~\eqref{eq:sznagy} becomes
\begin{align*}
\begin{pmatrix}
V'&\sqrt{\mathbbm1-V'(V')^*}\\
\sqrt{\mathbbm1-(V')^*V'}&-(V')^*
\end{pmatrix}&=
\begin{pmatrix}
V_0P_{\mathcal M}&\sqrt{\mathbbm1-V_0(P_{\mathcal M}V_0^*)}\\
\sqrt{\mathbbm1-V_0^*V_0P_{\mathcal M}}&-V_0^*
\end{pmatrix}\\
&=\begin{pmatrix}
V_0P_{\mathcal M}&\sqrt{\mathbbm1-V_0V_0^*}\\
\sqrt{\mathbbm1-P_{\mathcal M}}&-V_0^*
\end{pmatrix}.
\end{align*}
Finally, observe that $\mathbbm1-V_0V_0^*,\mathbbm1-P_{\mathcal M}$ are 
orthogonal projections so in particular they are positive semi-definite (note $\langle x,Px\rangle=\langle x,P^2x\rangle=\langle x,P^*Px\rangle=\langle Px,Px\rangle=\|Px\|^2\geq 0$ for all $P^*=P=P^2$)
and thus they are their own square root.
This yields the unitary dilation
\begin{equation}\label{eq:dilation_V0}
\begin{pmatrix}
V_0P_{\mathcal M}&\mathbbm1-V_0V_0^*\\
\mathbbm1-P_{\mathcal M}&-V_0^*
\end{pmatrix}.
\end{equation}
on $\mathcal H\times\mathcal H$ of $V_0$.
However, the direct product of vector spaces is not as meaningful in a ``quantum physics setting''
because there multipartite systems are described via the tensor product of the respective Hilbert spaces.
This is why we reformulate our preceding calculations as follows:
\begin{lemma}\label{lemma_iso_1}
Let $\mathcal M\subseteq\mathcal H$ be a closed subspace of a complex Hilbert 
space $\mathcal H$, and let $V_0:\mathcal M\to\mathcal H$ be an isometry.
There 
exists $U:\mathcal H\otimes\mathbb C^2\to\mathcal H\otimes\mathbb C^2$ unitary which extends $V_0$ in the sense that $U(x\otimes e_1)=V_0x\otimes e_1$ for all $x\in\mathcal M$.
\end{lemma}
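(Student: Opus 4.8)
The plan is to reuse the unitary dilation already obtained in Eq.~\eqref{eq:dilation_V0}, which acts on $\mathcal H\times\mathcal H$, and to transport it to the tensor product $\mathcal H\otimes\mathbb C^2$ via the canonical identification of the (orthogonal) direct sum with $\mathcal H\otimes\mathbb C^2$. Writing $\{e_1,e_2\}$ for the standard orthonormal basis of $\mathbb C^2$, I would introduce
\[
W:\mathcal H\times\mathcal H\to\mathcal H\otimes\mathbb C^2,\qquad (x,y)\mapsto x\otimes e_1+y\otimes e_2,
\]
and check that $W$ is unitary: it is clearly linear, it is isometric because $e_1,e_2$ are orthonormal (so $\|x\otimes e_1+y\otimes e_2\|^2=\|x\|^2+\|y\|^2$), and it is surjective since every vector in $\mathcal H\otimes\mathbb C^2$ decomposes as $x\otimes e_1+y\otimes e_2$ for suitable $x,y\in\mathcal H$.

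Then I would set
\[
U:=W
\begin{pmatrix}
V_0P_{\mathcal M}&\mathbbm1-V_0V_0^*\\
\mathbbm1-P_{\mathcal M}&-V_0^*
\end{pmatrix}
W^*,
\]
which is unitary as a composition of the three unitaries $W$, the dilation of Eq.~\eqref{eq:dilation_V0}, and $W^*=W^{-1}$. To verify the extension property, fix $x\in\mathcal M$. Then $W^*(x\otimes e_1)=(x,0)$; applying the block matrix and using $P_{\mathcal M}x=x$ yields $(V_0P_{\mathcal M}x,(\mathbbm1-P_{\mathcal M})x)=(V_0x,0)$; finally applying $W$ gives $U(x\otimes e_1)=V_0x\otimes e_1$, as claimed.

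I do not expect any genuine obstacle here, since the analytically nontrivial part---turning the contraction $V'=V_0P_{\mathcal M}$ into a unitary---was already handled by Sz.-Nagy's theorem in the computation preceding the statement. The only point requiring a little care is the bookkeeping between the two tensor components together with the identity $P_{\mathcal M}x=x$ valid on $\mathcal M$: it is precisely this identity that makes the lower-left block $\mathbbm1-P_{\mathcal M}$ annihilate vectors of the form $x\otimes e_1$ with $x\in\mathcal M$, so that no spurious $e_2$-component appears.
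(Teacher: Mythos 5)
Your proposal is correct and follows essentially the same route as the paper's proof: both transport the Sz.-Nagy dilation of Eq.~\eqref{eq:dilation_V0} to $\mathcal H\otimes\mathbb C^2$ via the unitary $(x,y)\mapsto x\otimes e_1+y\otimes e_2$ (your $W$ is the paper's $J$, and $W(\cdot)W^*$ is the paper's $J\circ U_0\circ J^{-1}$) and then verify the extension property by the same computation using $P_{\mathcal M}x=x$ for $x\in\mathcal M$.
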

\begin{proof}
In Eq.~\eqref{eq:dilation_V0} we already saw that
\begin{align*}
U_0:\mathcal H\times\mathcal H&\to\mathcal H\times\mathcal H\\
\begin{pmatrix}
x\\y
\end{pmatrix}&\mapsto\begin{pmatrix}
V_0P_{\mathcal M}x+(\mathbbm1-V_0V_0^*)y\\
(\mathbbm1-P_{\mathcal M})x-V_0^*y
\end{pmatrix}
\end{align*}
is a unitary dilation of $V_0$.
All that is left to do is to ``translate'' $U_0$ into a unitary on $\mathcal H\otimes\mathbb C^2$. For this note that the space $\mathcal H\otimes\mathbb C^2$ is isometrically isomorphic to $\mathcal H\times\mathcal H$ by means of the map $J:\mathcal H\times\mathcal H\to\mathcal H\otimes\mathbb C^2$, $(x,y)\mapsto x\otimes e_1+y\otimes e_2$ (i.e.~$J$ is a unitary transformation, Remark~2.6.8 in \cite{Kadison83}).
Thus we define $U:\mathcal H\otimes\mathbb C^2\to\mathcal H\otimes\mathbb C^2$, $x\mapsto (J\circ U_0\circ J^{-1})(x)$ as visualized in the following commutative diagram:
$$
\begin{tikzcd}
\mathcal H\otimes\mathbb C^2 \arrow[r, "U"] \arrow[d, "J^{-1}"] & \mathcal H\otimes\mathbb C^2              \\
\mathcal H\times\mathcal H \arrow[r, "U_0"]                     & \mathcal H\times\mathcal H \arrow[u, "J"]
\end{tikzcd}
$$
Obviously $U$ is unitary as it is a composition of unitaries, and $U$ satisfies the desired extension property because for all $x\in\mathcal M$
\begin{align*}
U(x\otimes e_1)=( J\circ U_0\circ J^{-1} )(x\otimes e_1)&=(J\circ U_0)\begin{pmatrix}
x\\0
\end{pmatrix}\\
&=J\begin{pmatrix}
V_0P_{\mathcal M}x\\
(\mathbbm1-P_{\mathcal M})x
\end{pmatrix}=J\begin{pmatrix}
V_0x\\
x-x
\end{pmatrix}=V_0x\otimes e_1\,.\qedhere
\end{align*}
\end{proof}

With this lemma at our disposal we are ready to have a different look at (Stinespring) dilations of quantum maps.
\section{The Stinespring Form of Quantum Maps}\label{sec_main}
Given any complex Hilbert spaces $\mathcal H,\mathcal{K}$, 
recall that a \textit{(Schrödinger) quantum map} is a linear map $\Phi$ between trace classes (Ch.~16 in \cite{MeiseVogt97en}) $\mathcal B^1(\mathcal H)$, $\mathcal B^1(\mathcal{K})$ which preserves the trace and is completely positive, that is,
for all $n\in\mathbb N$ the extended map $\Phi\otimes\operatorname{id}_n:\mathcal B^1(\mathcal H\otimes\mathbb C^n)\to\mathcal B^1(\mathcal{K}\otimes\mathbb C^n)$ sends positive semi-definite operators to positive semi-definite operators. Equivalently, $\Phi$ is completely positive if and only if there exists a family $(K_j)_{j\in J}\subset\mathcal B(\mathcal H,\mathcal{K})$---called \textit{Kraus operators}---such that 
$
\Phi(\cdot)=\sum_{j\in J} K_j(\cdot)K_j^*
$, where the sum converges in trace norm and $\sum_{j\in J}K_j^*K_j$ converges strongly to a bounded operator, cf.~%
Ch.~9, Thm.~2.3 in \cite{Davies76}.
If $\Phi$ is additionally trace-preserving, then $\sum_{j\in J}K_j^*K_j$ strongly converges to the identity.
The collection of all completely positive and trace-preserving maps (called \textsc{cptp} or quantum maps) from $\mathcal H$ to $\mathcal{K}$ will be denoted by $\textsc{cptp}(\mathcal H,\mathcal{K})$ (and $\textsc{cptp}(\mathcal H):=\textsc{cptp}(\mathcal H,\mathcal H)$).
A particularly important element in $\textsc{cptp}(\mathcal H\otimes\mathcal{K},\mathcal H)$ is the partial trace $\operatorname{tr}_{\mathcal K}$ which is the unique
linear map satisfying $\operatorname{tr}(\operatorname{tr}_{\mathcal K}(A)B)=\operatorname{tr}(A(B\otimes\mathbbm1))$ for all $A\in\mathcal B^1(\mathcal H\otimes\mathcal K)$, $B\in\mathcal B(\mathcal H)$, cf.~Def.~2.68 ff.~in \cite{Heinosaari12}.
Finally, $\mathbb D(\mathcal H)$ will denote the set of all \textit{states}, i.e.~all positive semi-definite trace-class operators of trace one.
With all the notation in place let us state and prove our main result:

\begin{theorem}\label{thm1}
Given any $\Phi\in \textsc{cptp}(\mathcal H)$ there exists a Hilbert space $\mathcal K$, a unit vector $\psi\in\mathcal K$, and a unitary operator $U$ on $\mathcal H\otimes \mathcal K$ such that
$$
\Phi\equiv\operatorname{tr}_{\mathcal K}\big(U((\cdot)\otimes|\psi\rangle\langle\psi)U^*\big)\,.
$$
Moreover,
\begin{itemize}
\item[(i)] if $(K_j)_{j\in J}$ is any set of Kraus operators of $\Phi$, then one can choose $\mathcal K$ to be $\ell^2(J)\otimes\mathbb C^2$ and $\psi:=e_{j_0}\otimes e_1$ for any $j_0\in J$.
In particular if $\mathcal H$ is separable, then $\mathcal K$ can be chosen separable, as well.
\item[(ii)] $U$ can be chosen such that the auxiliary qubit is a catalyst, i.e.~for all $\rho\in\mathbb D(\mathcal H)$ there exists $\omega\in\mathbb D(\mathcal H\otimes\ell^2(J))$ such that
$
U(\rho\otimes|e_{j_0}\rangle\langle e_{j_0}| \otimes |e_1\rangle\langle e_1|)U^*=\omega\otimes|e_1\rangle\langle e_1|
$.
\end{itemize}
\end{theorem}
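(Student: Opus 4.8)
The plan is to realize the familiar finite-dimensional recipe---collect the Kraus operators in the ``first column'' of a partial isometry and then dilate to a unitary---directly through Lemma~\ref{lemma_iso_1}, so that the choice-free extension does all the heavy lifting. First I would define the \emph{Stinespring isometry} $V:\mathcal H\to\mathcal H\otimes\ell^2(J)$ by $Vx:=\sum_{j\in J}(K_jx)\otimes e_j$. The first point to settle is well-definedness in infinite dimensions: because $\sum_{j\in J}K_j^*K_j$ converges strongly to $\mathbbm 1$ one has $\sum_{j\in J}\|K_jx\|^2=\sum_{j\in J}\langle x,K_j^*K_jx\rangle=\|x\|^2<\infty$, so the defining series lies in $\mathcal H\otimes\ell^2(J)$, and the same computation shows $\|Vx\|=\|x\|$, i.e.~$V$ is an isometry.

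Next I would set $\widetilde{\mathcal H}:=\mathcal H\otimes\ell^2(J)$ and fix the closed subspace $\mathcal M:=\mathcal H\otimes\operatorname{span}\{e_{j_0}\}\subseteq\widetilde{\mathcal H}$, on which I define the isometry $V_0:\mathcal M\to\widetilde{\mathcal H}$, $x\otimes e_{j_0}\mapsto Vx$. Applying Lemma~\ref{lemma_iso_1} with $\mathcal H$ replaced by $\widetilde{\mathcal H}$ yields a unitary $U$ on $\widetilde{\mathcal H}\otimes\mathbb C^2$ with the extension property $U((x\otimes e_{j_0})\otimes e_1)=(V_0(x\otimes e_{j_0}))\otimes e_1=(Vx)\otimes e_1$. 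Identifying $\widetilde{\mathcal H}\otimes\mathbb C^2\cong\mathcal H\otimes\mathcal K$ with $\mathcal K:=\ell^2(J)\otimes\mathbb C^2$ and $\psi:=e_{j_0}\otimes e_1$ already delivers the Hilbert space, unit vector, and unitary claimed in part~(i); separability of $\mathcal K$ for separable $\mathcal H$ is then immediate since $J$ may be taken countable.

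It remains to verify the Stinespring identity. I would first test it on pure inputs $\rho=|x\rangle\langle x|$, where $U(\rho\otimes|\psi\rangle\langle\psi|)U^*=|U(x\otimes\psi)\rangle\langle U(x\otimes\psi)|$ with $U(x\otimes\psi)=(Vx)\otimes e_1=\sum_{j}(K_jx)\otimes e_j\otimes e_1$. Tracing out $\mathcal K=\ell^2(J)\otimes\mathbb C^2$ uses $\operatorname{tr}(|e_j\rangle\langle e_k|\otimes|e_1\rangle\langle e_1|)=\delta_{jk}$ and collapses the double sum to $\sum_j|K_jx\rangle\langle K_jx|=\Phi(|x\rangle\langle x|)$. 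Since both sides are bounded (hence trace-norm continuous) linear maps and the pure states have dense span in $\mathcal B^1(\mathcal H)$, the identity extends to all of $\mathcal B^1(\mathcal H)$. The catalyst claim~(ii) then falls out of the very same computation: the extension property forces $U(\rho\otimes|e_{j_0}\rangle\langle e_{j_0}|\otimes|e_1\rangle\langle e_1|)U^*=(V\rho V^*)\otimes|e_1\rangle\langle e_1|$, so one may simply take $\omega:=V\rho V^*\in\mathbb D(\mathcal H\otimes\ell^2(J))$, and the qubit is returned untouched in the state $|e_1\rangle\langle e_1|$.

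The main obstacle I anticipate is not conceptual but the bookkeeping of infinite-dimensional convergence: one must justify the strong-convergence estimate that makes $V$ well defined, ensure the partial trace over the infinite-dimensional factor $\ell^2(J)$ may be taken term by term, and supply the density-plus-continuity argument lifting the pure-state verification to arbitrary trace-class $\rho$. Everything genuinely new compared to the finite-dimensional argument is absorbed into Lemma~\ref{lemma_iso_1}, whose Sz.-Nagy dilation sidesteps both the impossible ``isometry extension'' and the choice-dependent basis completion flagged in the introduction.
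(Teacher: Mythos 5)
Your proposal is correct and follows essentially the same route as the paper's own proof: the isometry $V_0$ on $\mathcal H\otimes\mathbb Ce_{j_0}$ built from the Kraus operators, the extension via Lemma~\ref{lemma_iso_1}, the verification on rank-one operators followed by density and trace-norm continuity, and the catalyst property read off from the extension identity. The only cosmetic difference is that you test on pure states $|x\rangle\langle x|$ and invoke polarization, whereas the paper works directly with $|x\rangle\langle y|$; both close the argument in the same way.
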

\begin{proof}
Starting from any set of Kraus operators $(K_j)_{j\in J}$ for $\Phi$ as well as an arbitrary
(but fixed) $j_0\in J$ we define a map on a (closed) subspace of $\mathcal H\otimes\ell^2(J)$ via
\begin{align*}
V_0:\mathcal H\otimes \mathbb Ce_{j_0} = \{x\otimes e_{j_0}:x\in\mathcal H\}&\to \mathcal H\otimes\ell^2(J)\\
x\otimes e_{j_0}&\mapsto\sum_{j\in J}K_jx\otimes e_j\,.
\end{align*}
Here $e_j\in\ell^2(J)$ is the ``$j$-th standard basis vector'' $e_j:J\to\mathbb C$, $j'\mapsto\delta_{jj'}$.
Because $(K_jx\otimes e_j)_{j\in J}$ is an orthogonal set in the Hilbert space  $\mathcal H\otimes\ell^2(J)$, by Prop.~2.2.5 in \cite{Kadison83} $\sum_{j\in J}K_jx\otimes e_j$ exists if and only if $\sum_{j\in J}\|K_jx\|^2<\infty$; but this holds due to
\begin{equation}\label{eq:4'}
\sum_{j\in J_F}\|K_jx\|^2=\sum_{j\in J_F}\langle K_jx,K_jx\rangle=\Big\langle x,\sum_{j\in J_F}K_j^*K_jx\Big\rangle
\end{equation}
for all finite subsets $J_F\subseteq J$ and all $x\in\mathcal H$
together with the fact that the Kraus operators satisfy $\sum_{j\in J}K_j^*K_j\to\mathbbm 1$ in the strong (hence the weak) operator topology. This shows that $V_0$ is well-defined.
Moreover, $V_0$ is an isometry because, again, Prop.~2.2.5 from \cite{Kadison83} for all $x\in\mathcal H$ yields
$$
\|V_0(x\otimes e_{j_0})\|^2=\Big\|\sum_{j\in J}K_jx\otimes e_j\Big\|^2=\sum_{j\in J}\|K_jx\|^2\overset{\eqref{eq:4'}}=\|x\|^2\,.
$$
Now Lemma~\ref{lemma_iso_1} comes into play:
it lets us extend $V_0$ to a unitary $U$ on $\mathcal H\otimes\ell^2(J)\otimes\mathbb C^2$
which by Eq.~\eqref{eq:dilation_V0}---when identifying $\mathcal H\otimes\ell^2(J)\otimes\mathbb C^2\cong (\mathcal H\otimes\ell^2(J))\times(\mathcal H\otimes\ell^2(J))$---is of the form
\begin{equation}\label{eq:U_block_form}
\begin{pmatrix}
V_0(\mathbbm1\otimes|e_{j_0}\rangle\langle e_{j_0}|)&\mathbbm1-V_0V_0^*\\
\mathbbm 1\otimes(\mathbbm1-|e_{j_0}\rangle\langle e_{j_0}|)&-V_0^*
\end{pmatrix}=\begin{pmatrix}
\sum_{j\in J}K_j\otimes|e_j\rangle\langle e_{j_0}|&\mathbbm1-(\sum_{j,j'\in J}K_jK_{j'}^*\otimes|e_j\rangle\langle e_{j'}|)\\
\mathbbm 1\otimes(\mathbbm1-|e_{j_0}\rangle\langle e_{j_0}|)&-\sum_{j\in J}K_j^*\otimes|e_{j_0}\rangle\langle e_j|
\end{pmatrix}.
\end{equation}
Define $\psi:=e_{j_0}\otimes e_1\in\ell^2(J)\otimes\mathbb C^2=:\mathcal K$.
Note that if $\mathcal H$ is separable, then $J$ can be chosen countable (Ch.~9, Thm.~2.3 in \cite{Davies76}) meaning $\mathcal K$ is separable (this proves (i)).
Given $x,y\in\mathcal H$ we compute
\begin{align*}
\operatorname{tr}_{\mathcal K}\big(U(|x\rangle\langle y|\otimes|\psi\rangle\langle\psi|)U^*\big)&=\operatorname{tr}_{\ell^2(J)\otimes\mathbb C^2}\big(|U(x\otimes e_{j_0}\otimes e_1)\rangle\langle U(y\otimes e_{j_0}\otimes e_1)|\big)\\
&=\operatorname{tr}_{\ell^2(J)\otimes\mathbb C^2}\big(|V_0(x\otimes e_{j_0})\rangle\langle V_0(y\otimes e_{j_0})|\otimes|e_1\rangle\langle e_1|\big)\\
&=\operatorname{tr}_{\ell^2(J)}\Big(\Big|\sum_{j\in J}K_jx\otimes  e_j\Big\rangle\Big\langle \sum_{j'\in J}K_{j'}y\otimes e_{j'}\Big|\Big)\\
&=\sum_{j,j'\in J}\operatorname{tr}_{\ell^2(J)}\big(|K_jx\rangle\langle K_{j'}y| \otimes  |e_j\rangle\langle e_{j'}|\big)\\
&=\sum_{j,j'\in J}K_j|x\rangle\langle y|K_{j'}^*\langle e_{j'},e_j\rangle=\sum_{j\in J}K_j|x\rangle\langle y|K_j^*=\Phi(|x\rangle\langle y|)\,.
\end{align*}
In the second-to-last line we used that the partial trace (just like every quantum map) is continuous (Prop.~2 in \cite{vE_dirr_semigroups}).
Hence this also holds for both $\Phi$ and $\operatorname{tr}_{\mathcal K}(U((\cdot)\otimes|\psi\rangle\langle\psi)U^*)$, meaning they coincide on all of $\mathcal B^1(\mathcal H)$
because we showed that they coincide on the dense subset $\operatorname{span}\{|x\rangle\langle y|:x,y\in\mathcal H\}\subseteq \mathcal B^1(\mathcal H)$.
The only statement left to prove is the catalyst property (ii); this follows readily from
the extension property in Lemma~\ref{lemma_iso_1}, resp.~from Eq.~\eqref{eq:U_block_form}.
\end{proof}
\noindent 
Note that this construction works for any set of Kraus operators so the smallest auxiliary system one can get this way has dimension ``Kraus rank'' times two.
In particular, when disregarding the catalyst qubit this yields a smaller auxiliary system than the construction of Hellwig and Kraus,
and the dimension being the Kraus rank is tight for general systems.
However, in practice one could also choose a ``non-minimal'' set of Kraus operators -- if, e.g., the corresponding unitary is easier to implement in practice --
or use a different (not fully general) construction altogether.

\begin{remark}
Be aware that the Stinespring form
\begin{itemize}
\item[(i)] 
is not limited to quantum maps with same domain and co-domain:
given $\Phi\in\textsc{cptp}(\mathcal H,\mathcal{K})$ one
obtains an extension of Theorem~\ref{thm1} to arbitrary quantum maps
via the auxilliary map
$
\rho\mapsto |\psi'\rangle\langle\psi'|\otimes\Phi(\operatorname{tr}_{\mathcal{K}}(\rho))\in\textsc{cptp}(\mathcal H\otimes\mathcal{K})
$
where $\psi'\in\mathcal H$ is any unit vector,
cf.~also Coro.~1 in \cite{vE_dirr_semigroups}.
\item[(ii)] exists equivalently in the Heisenberg picture:
given any (Heisenberg) quantum channel $\Phi^*$ -- meaning $\Phi^*$ is any completely positive, unital (i.e.~identity preserving), and ultraweakly continuous map --
one has $\Phi^*=\operatorname{tr}_{|\psi\rangle\langle\psi|}(U^*((\cdot)\otimes\mathbbm1)U^*)$ where
 $\psi,U$ are the same as in Thm.~\ref{thm1}
and $\operatorname{tr}_{|\psi\rangle\langle\psi|}$ is the partial trace w.r.t.~the state $|\psi\rangle\langle\psi|$ (Ch.~9, Lemma~1.1 in \cite{Davies76}).
For more on the Stinespring form in the Heisenberg picture and its relation to Stinespring's theorem for $C^*$-algebras we refer to Coro.~2 ff.~in \cite{vE_dirr_semigroups}.
\end{itemize}
\end{remark}
\section{Comparing Constructions: An Example}\label{sec_comparison_example}
To better understand the different techniques for generating Stinespring forms 
let us illustrate and compare them via a simple example. While the following example is finite-dimensional -- hence the construction described in the introduction yields the smallest auxiliary system -- such a choice will clarify how the two constructions intended for infinite dimensions circumvent the problems discussed in the introduction.

Let $\Phi\in\textsc{cptp}(n)$ be given such that $\{K_1,K_2\}\subset\mathbb C^{n\times n}$ is a set of Kraus operators for $\Phi$.
As explained in the introduction, the standard construction first collects $K_1,K_2$ in a $2n\times 2n$-matrix:
\begin{equation}\label{eq:ex_1}
\begin{pmatrix}
K_1&0\\K_2&0
\end{pmatrix}
\end{equation}
Because $K_1^*K_1+K_2^*K_2=\mathbbm1$ the first $n$ columns in~\eqref{eq:ex_1} form an orthonormal system in $\mathbb C^{2n}$ which can be extended to an orthonormal basis. Build $U_{12},U_{22}\in\mathbb C^{n\times n}$ from the added vectors such that
$$
U_{\mathrm{F}}:=\begin{pmatrix}
K_1&U_{12}\\K_2&U_{22}
\end{pmatrix}
$$
is unitary, and one has $\Phi\equiv\operatorname{tr}_{\mathbb C^2}(U_{\mathrm{F}}((\cdot)\otimes|e_1\rangle\langle e_1|)U_{\mathrm{F}}^*)$.

Next, let us carry out the construction of Hellweg \& Kraus (cf.~Appendix~\ref{app_A}): 
similar to Eq.~\eqref{eq:ex_1} one starts with the isometry
\begin{equation}\label{eq:ex_2}
A:=\begin{pmatrix}
K_1\\K_2
\end{pmatrix}\in\mathbb C^{2n\times n}\,.
\end{equation}
However, instead of completing it to a unitary by adding suitable columns -- which cannot be guaranteed in infinite dimensions -- they enlarge the auxiliary system in order to define
$$
U_{\mathrm{K}}:=\begin{pmatrix}
0&A^*\\A&-(\mathbbm 1-AA^*)
\end{pmatrix}=\begin{pmatrix}
0&K_1^*&K_2^*\\
K_1&K_1K_1^*-\mathbbm1&K_1K_2^*\\
K_2&K_2K_1^*&K_2K_2^*-\mathbbm1
\end{pmatrix}.
$$
This $U_{\mathrm{K}}$ satisfies $\Phi\equiv\operatorname{tr}_{\mathbb C^3}(U_{\mathrm{K}}((\cdot)\otimes|e_1\rangle\langle e_1|)U_{\mathrm{K}}^*)$.

Finally, the construction presented in our paper also starts with Eq.~\eqref{eq:ex_2} but then turns $A$ into the ``block matrix'' from Eq.~\eqref{eq:ex_1} 
as part of the larger matrix
$$
U_{\mathrm{N}}:=\begin{pmatrix}
K_1&0&\mathbbm1-K_1K_1^*&-K_1K_2^*\\
K_2&0&-K_2K_1^*&\mathbbm1-K_2K_2^*\\
0&0&-K_1^*&-K_2^*\\
0&\mathbbm1&0&0
\end{pmatrix}.
$$
Choosing the auxiliary system to be $\mathbb C^2\otimes\mathbb C^2$ this $U_N{\mathrm{N}}$
satisfies $\Phi\equiv\operatorname{tr}_{\mathbb C^2\otimes\mathbb C^2}(U_{\mathrm{N}}((\cdot)\otimes|e_1\rangle\langle e_1|)U_{\mathrm{N}}^*)$.
Note that a shuffled version of $U_{\mathrm{K}}$ ``appears'' in
$U_{\mathrm{N}}$ (up to some minus signs which can be neglected) 
and the two matrices differ by an identity on a complementary subspace.
Moreover, the fact that the ``effective dimension'' 
of our construction is smaller (in the sense that part of the space $U_{\mathrm{N}}$ acts on is a catalyst w.r.t.~$U_{\mathrm{N}}$) 
also manifests here: the first column of the upper left block of $U_{\mathrm{N}}$ features only the Kraus operators.
This is why
$$
U_{\mathrm{N}}(\rho\otimes|e_1\rangle\langle e_1|)U_{\mathrm{N}}^*=
\begin{pmatrix}
K_1\rho K_1^*&K_1\rho K_2^*\\
K_2\rho K_1^*&K_2\rho K_2^*
\end{pmatrix}\otimes |e_1\rangle\langle e_1|=
U_{\mathrm{F}}(\rho\otimes|e_1\rangle\langle e_1|)U_{\mathrm{F}}^*\otimes|e_1\rangle\langle e_1|\,,
$$
whereas
$$
U_{\mathrm{N}}(\rho\otimes|e_1\rangle\langle e_1|)U_{\mathrm{N}}^*= \begin{pmatrix}
0&0&0\\
0&K_1\rho K_1^*&K_1\rho K_2^*\\
0&K_2\rho K_1^*&K_2\rho K_2^*
\end{pmatrix}=0\oplus U_{\mathrm{F}}(\rho\otimes|e_1\rangle\langle e_1|)U_{\mathrm{F}}^*\,.
$$
As a follow-up question one could ask how these different constructions manifest in the 
(practically important) Gaussian setting mentioned in the introduction.
While this is not obvious -- as there it is not the Kraus rank that matters, but rather the number of modes --
pursuing this question could lead to a deeper understanding of how the Kraus
operators are connected to seemingly unrelated properties of a channel.

\begin{acknowledgments}
I would like to thank Jens Eisert for useful comments during the preparation of this manuscript. In particular he pointed out
some references on Stinespring forms of bosonic Gaussian channels which I was not aware of yet.
This work has been supported by the Einstein Foundation (Einstein Research Unit on Quantum Devices) and the MATH+ Cluster of Excellence.
\end{acknowledgments}

\appendix
\section{Original Proof of Hellwig and Kraus}\label{app_A}
This appendix will revolve around the following statement, respectively the proof given by Hellwig and Kraus (originally in \cite{HK70}, and in more detail in Sec.~4 in \cite{Kraus73} or Thm.~2 in \cite{Kraus83}):\medskip

\setlength{\leftskip}{1cm}\setlength{\rightskip}{1cm}

{\it\noindent Given any complex Hilbert space $\mathcal H$ and any quantum map $\Phi$ on $\mathcal H$ there exists a Hilbert space $\mathcal K$, a unit vector $\psi\in\mathcal K$, and a \textnormal{self-adjoint} unitary operator $U$ on $\mathcal H\otimes \mathcal K$ such that
$$
\Phi\equiv\operatorname{tr}_{\mathcal K}\big(U((\cdot)\otimes|\psi\rangle\langle\psi)U^*\big)\,.
$$
If $(K_j)_{j\in J}$ is a set of Kraus operators of $\Phi$, then one can choose $\mathcal K$ to be $\ell^2(J\cup\{\mathrm{s}\})$ where $\mathrm{s}$ is any symbol not in $J$.}\medskip

\setlength{\leftskip}{0pt}\setlength{\rightskip}{0pt}

\noindent We note that their proof was given for separable Hilbert spaces $\mathcal H$ but extends without further ado to arbitrary Hilbert spaces. Their construction goes as follows.
Starting from a set of Kraus operators $(K_j)_{j\in J}$ for $\Phi$ (cf.~Ch.~9, Thm.~2.3 in \cite{Davies76}) one first defines the following objects:
\begin{itemize}
\item $J_s:=J\cup\{s\}$ where $s$ is any symbol not in $J$
\item $\mathcal K:=\ell^2(J_s)$ is the Hilbert space of all functions $f:J_s\to\mathbb C$ which are square-summable, i.e.~$\sum_{j\in J_s}|f(j)|^2<\infty$, cf.~Example~1.7.3 \& Example~2.1.12 in \cite{Kadison83}
\item $\mathcal H_s:=\mathcal H=:\mathcal H_j$ for all $j\in J$
\item $\iota:\mathcal H_s\oplus\bigoplus_{j\in J}\mathcal H_j\to \mathcal H\otimes\mathcal K$ is the isometric isomorphism defined via $x_s\oplus\bigoplus_{j\in J}x_j\mapsto \sum_{j\in J_s}x_j\otimes e_j$. Hence $\mathcal H_s\oplus\bigoplus_{j\in J}\mathcal H_j\cong  \mathcal H\otimes\mathcal K$, cf.~Remark~2.6.8 in \cite{Kadison83}.
\end{itemize}
The idea now is to define an isometry $A:\mathcal H\to\bigoplus_{j\in J}\mathcal H_j$, embed it into a unitary operator $U_0$ on $\mathcal H_s\oplus\bigoplus_{j\in J}\mathcal H_j$, and finally use $\iota$ to translate $U_0$ into a unitary operator $U$ on $\mathcal H\otimes\mathcal K$ as visualized in the following diagram:
$$
\begin{tikzcd}
\mathcal H\otimes\ell^2(J_s) \arrow[r, "U"] \arrow[d, "\iota^{-1}"] & \mathcal H\otimes\ell^2(J_s)                                        \\
\mathcal H_s\oplus\bigoplus_{j\in J}\mathcal H_j \arrow[r, "U_0"]   & \mathcal H_s\oplus\bigoplus_{j\in J}\mathcal H_j \arrow[u, "\iota"]
\end{tikzcd}
$$
They started by defining $A:\mathcal H_s\to\bigoplus_{j\in J}\mathcal H_j$ via $Ax:=\bigoplus_{j\in J}K_jx$. One readily verifies that $A$ is an isometry (so in particular well-defined) because
$$
\|Ax\|^2=\sum_{j\in J}\|K_jx\|^2=\sum_{j\in J}\langle x,K_j^*K_jx\rangle=\|x\|^2
$$
as $\sum_{j\in J}K_j^*K_j\to\mathbbm1$ in the strong operator topology. 
With this they defined $U_0$ via\footnote{
Originally, Hellwig and Kraus considered general 
quantum operations, cf.~Sec.~4 in \cite{Heinosaari12} meaning the Kraus operators 
only need to satisfy $\sum_{j\in J}K_j^*K_j\leq\mathbbm1$. This made their construction of $U_0$ a bit more involved (Eq.~(4.3) in \cite{Kraus73}); however, as we are only interested in a dilation of quantum maps here we may use the simpler version of $U_0$ (Eq.~(5.27) in \cite{Kraus83}).
}
\begin{align*}
U_0:\mathcal H_s\oplus\bigoplus_{j\in J}\mathcal H_j&\to\mathcal H_s\oplus\bigoplus_{j\in J}\mathcal H_j\\
\begin{pmatrix}
x\\y
\end{pmatrix}&\mapsto\begin{pmatrix}
A^*y\\Ax-(\mathbbm1-AA^*)y
\end{pmatrix}=\begin{pmatrix}
0&A^*\\A&-(\mathbbm 1-AA^*)
\end{pmatrix}\begin{pmatrix}
x\\y
\end{pmatrix}.
\end{align*}
Evidently, $U_0$ is a self-adjoint involution; hence $U_0$ is unitary and so is the ``translated'' operator $U:=\iota\circ U_0\circ\iota^{-1}$ on $\mathcal H\otimes\mathcal K$ (because $\iota$ is a unitary transformation).
Note that
\begin{equation}\label{eq:app_1}
U(x\otimes e_s)=(\iota\circ U_0)\begin{pmatrix}
x\\0
\end{pmatrix}=\iota\begin{pmatrix}
0\\Ax
\end{pmatrix}= \iota\begin{pmatrix}
0\\\bigoplus_{j\in J}K_jx
\end{pmatrix} =\sum_{j\in J}K_jx\otimes e_j
\end{equation}
for all $x\in\mathcal H_s=\mathcal H$.
Defining $\psi:=e_s\in\ell^2(J_s)$ one for all $x,y\in\mathcal H$ finds
\begin{align*}
\operatorname{tr}_{\ell^2(J_s)}\big(U(|x\rangle\langle y|\otimes|e_s\rangle\langle e_s|)U^*\big)&\overset{\hphantom{\eqref{eq:app_1}}}=\operatorname{tr}_{\ell^2(J_s)}\big(|U(x\otimes e_s)\rangle\langle U(y\otimes e_s)|\big)\\
&\overset{\eqref{eq:app_1}}=\operatorname{tr}_{\ell^2(J)}\Big(\Big|\sum_{j\in J}K_jx\otimes  e_j\Big\rangle\Big\langle \sum_{j'\in J}K_{j'}x\otimes e_{j'}\Big|\Big)\\
&\overset{\hphantom{\eqref{eq:app_1}}}=\sum_{j,j'\in J}\operatorname{tr}_{\ell^2(J)}\big(|K_jx\rangle\langle K_{j'}x| \otimes  |e_j\rangle\langle e_{j'}|\big)\\
&\overset{\hphantom{\eqref{eq:app_1}}}=\sum_{j,j'\in J}K_j|x\rangle\langle y|K_{j'}^*\langle e_{j'},e_j\rangle=\sum_{j\in J}K_j|x\rangle\langle y|K_j^*=\Phi(|x\rangle\langle y|)\,.
\end{align*}
A standard continuity argument shows that $\Phi\equiv\operatorname{tr}_{\mathcal K}(U((\cdot)\otimes|\psi\rangle\langle\psi)U^*)$ on all of $\mathcal B^1(\mathcal H)$.


\bibliography{../../../../control21vJan20.bib}
\end{document}